\theoremstyle{definition}
\newtheorem*{definition*}{Definition}
\newtheorem{remark}{Remark}
\newtheorem*{remark*}{Remark}
\theoremstyle{plain}
\newtheorem{proposition}{Proposition}
\newtheorem{lemma}{Lemma}
\theoremstyle{remark}
\newcommand{\cP}{\mathcal P}
\newcommand{\beq}{\begin{equation}}
\newcommand{\eeq}{\end{equation}}
\newcommand{\beqa}{\begin{eqnarray*}}
\newcommand{\eeqa}{\end{eqnarray*}}
\begin{document}

\title{A Simple and Efficient Method to Compute a Single Linkage Dendrogram}
\author{
	Huanbiao Zhu \\
	Department of Statistics \\
	University of Washington \\
	\texttt{huanbz@uw.edu}
	\and
	Werner Stuetzle \\
	Department of Statistics \\
	University of Washington \\
	\texttt{wxs@uw.edu}
}
\date{Oct 31, 2019}
\maketitle
\doublespacing

\begin{abstract}
\noindent We address the problem of computing a single linkage dendrogram. A possible approach is to: (i) Form an edge weighted graph $G$ over the data, with edge weights reflecting dissimilarities. (ii) Calculate the MST $T$ of $G$. (iii) Break the longest edge of $T$ thereby splitting it into subtrees $T_L$, $T_R$. (iv) Apply the splitting process recursively to the subtrees. This approach has the attractive feature that Prim's algorithm for MST construction calculates distances as needed, and hence there is no need to ever store the inter-point distance matrix. The recursive partitioning algorithm requires us to determine the vertices (and edges) of $T_L$ and $T_R$. We show how this can be done easily and efficiently using information generated by Prim's algorithm without any additional computational cost.
\end{abstract}

\section{Introduction}
In a generic clustering problem we are given a collection $V$ of $n$ objects and a function $d(v_i, v_j)$ measuring the dissimilarity between objects $v_i$ and $v_j$. The goal is to partition $V$ into subsets(“clusters”) such that observations in the same cluster are similar and dissimilar from observations in other clusters.


\subsection{Hierarchical Clustering}
Clustering methods come in two varieties, $\it flat$ and $\it hierarchical$. Flat methods require the user to
provide a target number $k$ of clusters and will then generate a
partition $\cP_k$ = $\{C_1,\ldots C_k\}$ of $V$. Hierarchical methods differ from flat methods in that they generate a hierarchy of partitions ${\cal P}_1,...,{\cal P}_n$. A sequence of partitions is called hierarchical if each cluster in $\cP_i$ is the union of clusters in $\cP_{i + 1}$.

Hierarchical methods can be agglomerative or divisive. Hierarchical agglomerative clustering (HAC) methods generate partitions $\cP_1,\ldots, \cP_n$ by iterative merging. Initially, every object forms a cluster. Then we repeatedly merge the two clusters with the minimum distance (dissimilarity) until only one cluster is left. 
Different hierarchical methods differ in the definition of the distance between clusters. We will focus on single linkage clustering where $D(C_1, C_2)$ is defined as the minimum distance between an object in $C_1$ and an object in $C_2$, i.e. $D(C_1, C_2)=min(d(v_i,v_j))$ with $v_i\in C_1$ and $v_j\in C_2$. 




\begin{figure}[!htb]
	\captionsetup{font=small}
	\hfill
	\begin{subfigure}[h]{0.49\linewidth}
		\includegraphics[width=0.9\linewidth]{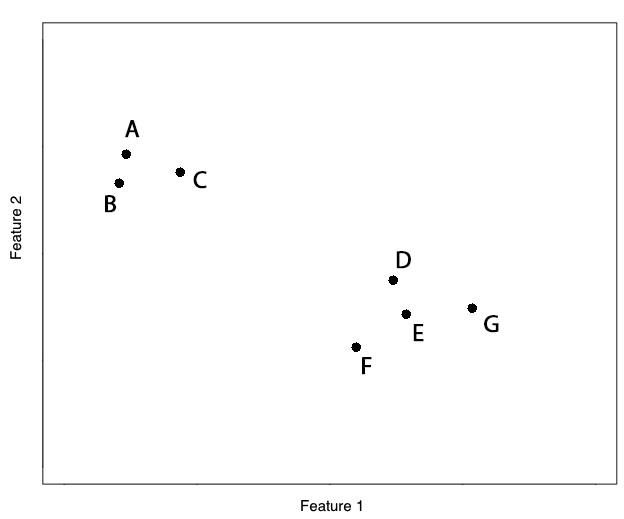}
		\caption{}
	\end{subfigure}
	\hfill
	\begin{subfigure}[h]{0.49\linewidth}
		\includegraphics[width=0.8\linewidth]{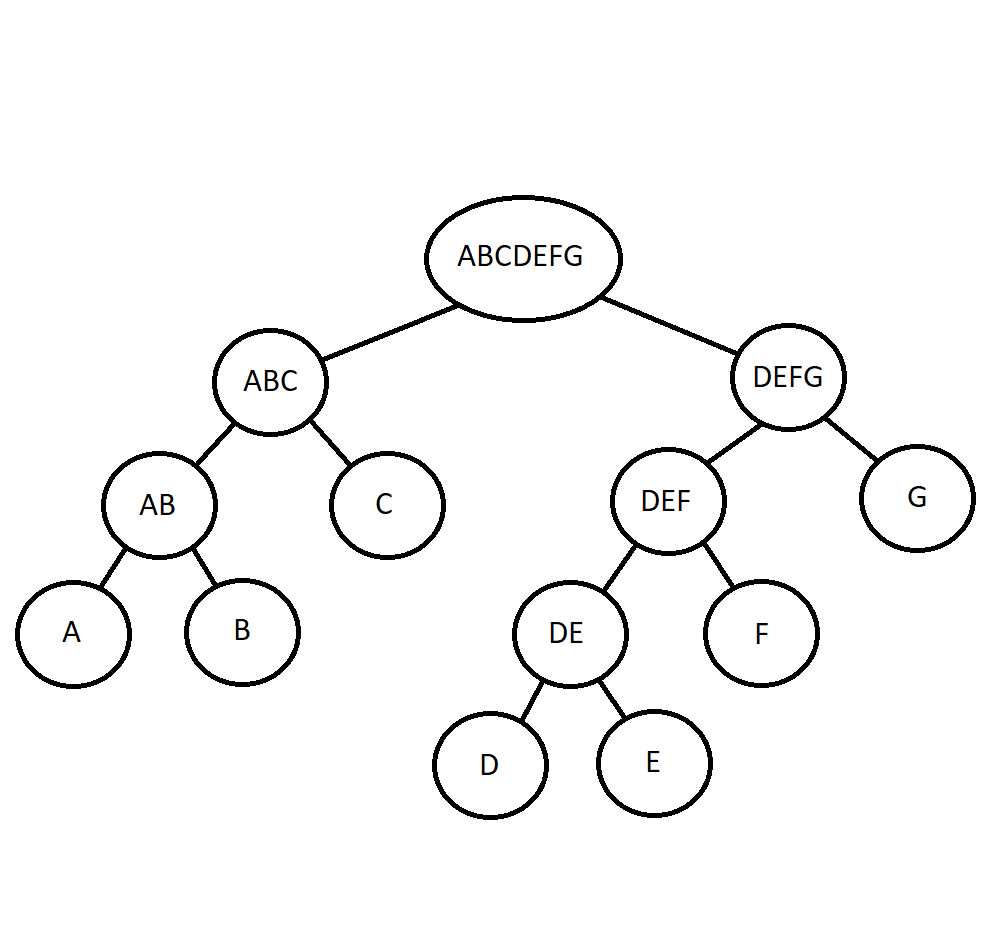}
		\caption{}
	\end{subfigure}
	\hfill
	\begin{subfigure}[h]{0.49\linewidth}
		\includegraphics[width=0.9\linewidth]{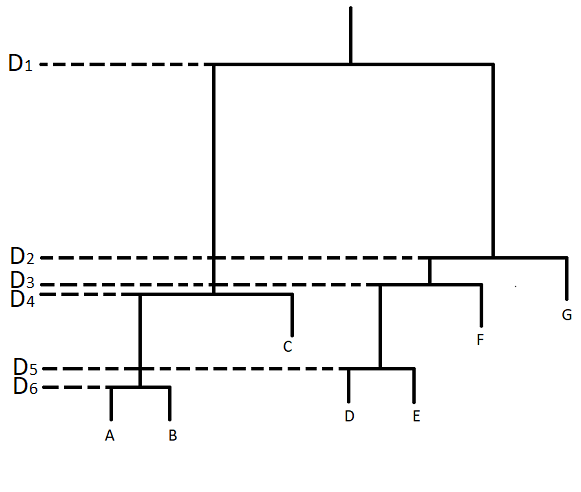}
		\caption{}
	\end{subfigure}
	\hfill
	\begin{subfigure}[h]{0.49\linewidth}
		\includegraphics[width=0.9\linewidth]{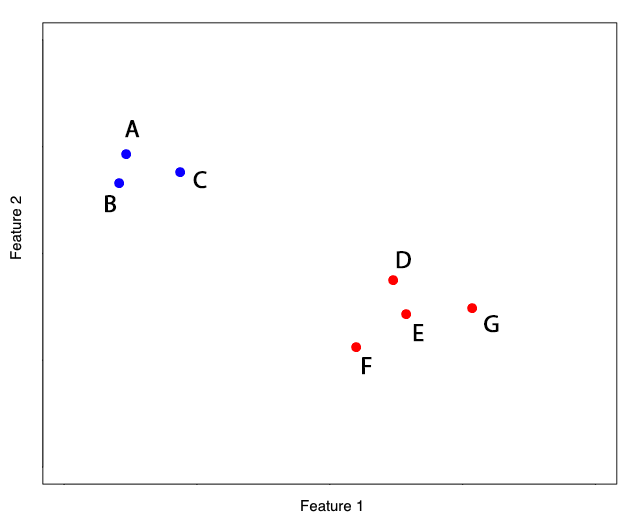}
		\caption{}
	\end{subfigure}%
	\caption{Single linkage clustering. \textbf{(a)} Sample data set.  \textbf{(b)} Binary tree of clusters. \textbf{(c)} Dendrogram of data set with merge distances. \textbf{(d)} Partition of data set with distance threshold $D_1$}
\end{figure}

\subsection{Dendrogram}
The result of the merge process can be represented as a binary tree
with $n$ leaves. Each node of the tree represents a subset of the observations, called the {\it node members}. Each leave represents an individual observation. Each internal node represents the union of the members of its daughter nodes and is associated with a {\it merge  distance}, the distance between the two clusters being merged.

A layout of this tree where the root is at the top, the leaves are at the bottom, and the vertical coordinate of an interior node is the merge distance, is called a dendrogram. 

Figure 1 shows a dataset (a), the binary tree generated by single linkage clustering (b), the corresponding single linkage dendogram (c), and the partition of the data into two clusters (d).

\subsection{Extracting Partitions}
Any subtree of a dendrogram defines a partition of $V$; the members of the leaves are the clusters. The most commonly used pruning method is dendrogram cutting: choose a distance threshold $D^*$ and eliminate all nodes with merge distance less than $D^*$. Figure 1(d) shows the partition of our sample data set obtained by dendrogram cutting with distance threshold $D_1$. There are alternative pruning methods; see for example \cite{runtpruning}.


\subsection{Computing the Single Linkage Dendrogram}



The single linkage dendrogram could in principle be computed using the iterative merging algorithm sketched in Section 1.1. In practice, however this is not an attractive option because it requires storing the interpoint distance matrix. An alternative was suggested by Gower and Ross \cite{mst}. They proposed to first compute the minimal spanning tree (MST) T of the data and then obtain the single linkage dendrogram by recursive partitioning: break the longest edge of the T, thereby splitting T into two subtrees $T_L$ and $T_R$, and then apply the splitting operation recursively to the two subtrees. The key advantage of this approach is that the MST can be computed using Prim's algorithm \cite{prim} without ever storing the interpoint distance matrix. Prim's algorithm produces a list of MST edges. The remaining problem is to determine the edges of $T_L$ and $T_R$, and thereby the node members of the corresponding dendrogram nodes. We propose a simple and efficient method of identifying the vertices (and edges) of $T_L$ and $T_R$ using information generated by Prim's algorithm.





\section{Prim's Algorithm and Prim's Order}


\subsection{Prim's algorithm}
Prim's algorithm finds a minimal spanning tree of a weighted connected graph $G=(V,E,W)$ (In the application of the MST to single linkage clustering, $G$ is typically the complete graph over some set $V$ of points in the Euclidean space and the edge weights are the Euclidean distances). The algorithm starts a tree fragment by choosing an arbitrary seed vertex $v_{seed}$ and then progressively connects the out-vertices(vertices that have yet been connected) to the fragment. Below is an outline of the algorithm.
\begin{enumerate}[Step 1:]
	\item Initialization: Choose an arbitrary vertex $v_{seed}\in V$ and set $V_{mst}=\{v_{seed}\}, E_{mst}=\emptyset$.
	\item Iteration: While $V_{mst}\neq V:$
	\begin{enumerate}
		\item Find an edge $(u, v)$ such that $u\in V_{mst}, v \notin V_{mst}$ and $W(u, v)$ is minimized.
		\item Add $v$ into $V_{mst}$, $(u, v)$ into $E_{mst}$
	\end{enumerate}
\end{enumerate}

The MST is unique if all edges have distinct weights.

\subsection{Prim's order}
When we apply Prim's algorithm to $G$, each iteration adds one vertex to the fragment. This defines an order for the vertices which we call \textit{Prim's order (of G)}. Let $P(v_k)$ denote the position of $v_k$ in Prim's order. The order depends on the choice of the seed vertex $v_{seed}$. The seed vertex is arbitrary unless otherwise noted. By default, we define $P(v_{seed})=1$. Prim's order of vertices also induces an order of the MST edges: For an MST edge $e=(v_i,v_j)$, define $P(e)=max(P(v_i),P(v_j))$.

\begin{remark}
If there are no tied edge weights in $G$, then the MST $T$ and Prim's order of $G$ are unique. If there are tied edge weights, then there might be more than one MST as well as more than one Prim's order for a given seed vertex.
\end{remark}

\section{Finding Connected Components after Breaking the Longest MST Edge}
Based on Prim's algorithm and Prim's order, we will introduce a method which can efficiently find two connected components obtained by breaking the longest MST edge. For sake of simplicity, let us assume for the moment that the graph $G$ has no tied edge weights. We will treat the general case in Section 4.  

\begin{figure}[!htb]
     \centering
     \captionsetup{justification=justified, font=small}
    \begin{subfigure}[t]{0.49\textwidth}
           \centering
        \raisebox{-\height}{\includegraphics[width=0.9\textwidth]{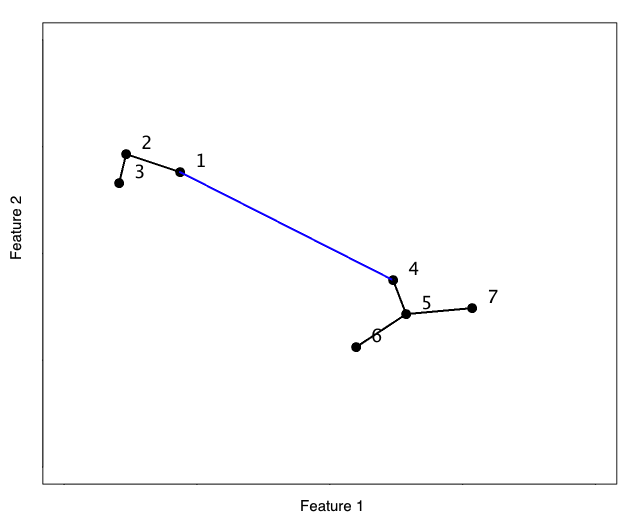}}
        \caption{}
    \end{subfigure}
    \hfill
    \begin{subfigure}[t]{0.49\textwidth}
        \centering
        \raisebox{-\height}{\includegraphics[width=0.9\textwidth]{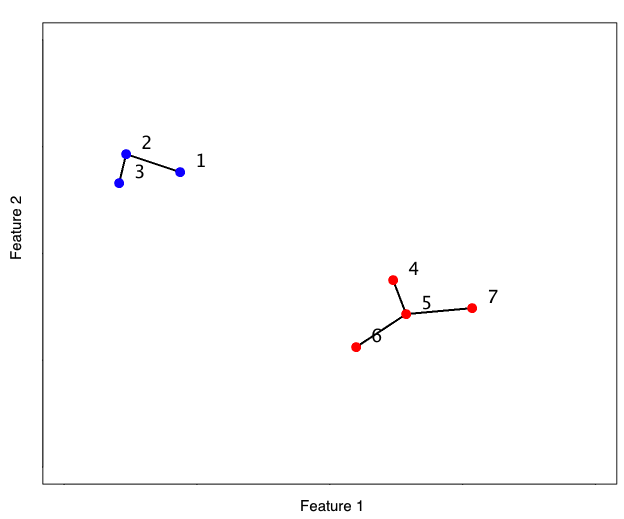}}
        \caption{}
    \end{subfigure}
    \begin{subfigure}[t]{0.49\textwidth}
        \centering
        \raisebox{-\height}{\includegraphics[width=0.9\textwidth]{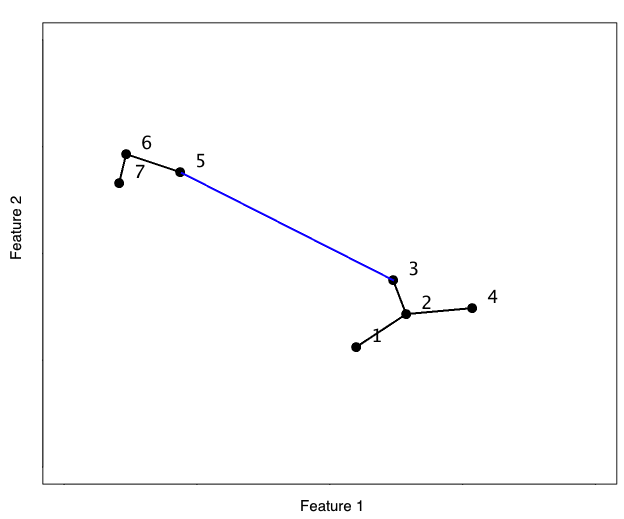}}
        \caption{}
    \end{subfigure}
    \hfill
    \begin{subfigure}[t]{0.49\textwidth}
        \centering
        \raisebox{-\height}{\includegraphics[width=0.9\textwidth]{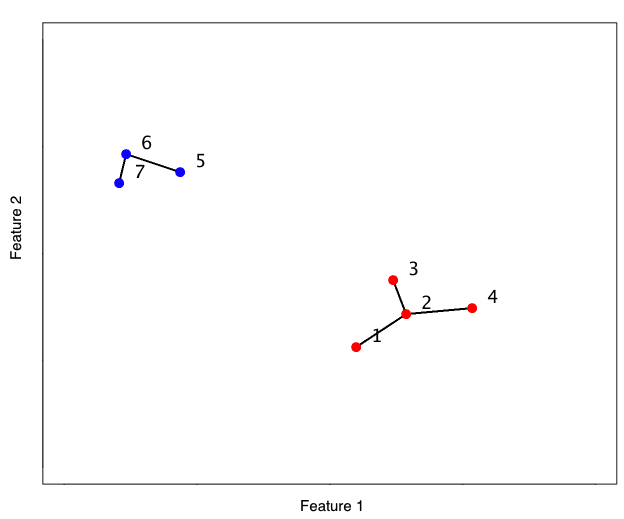}}
    \caption{} 
    \end{subfigure}
    \caption{Same MST with different Prim's orders and the subtrees after breaking the longest edge. \textbf{(a)} MST with the longest edge marked in blue; numbers next to the vertices indicate the Prim's orders. \textbf{(b)} Subtrees after breaking the longest edge in (a). \textbf{(c)} Same MST as in (a); different Prim's order. \textbf{(d)} Subtrees after breaking the longest edge in (c).}
\end{figure}

Take a look at Figure 2. Panels (a) and (c) show the sample data from Figure 1 and the MST. The numbers next to the vertices indicate their Prim's orders. The orders in Panels $(a)$ and $(c)$ are different because the seed vertices are different. The longest edge in (a) (marked in blue) connects vertices with Prim's order 1 and 4 while the same longest edge in (c) connects vertices with Prim's order 5 and 3.



When we break the longest edge, we obtain two subtrees which are shown in panels (b) and (d). We notice that the vertices in the two subtrees share a common characteristic. In (b), the vertices in one subtree all have Prim's order less than 4 and the vertices in the other subtree all have Prim's order greater than or equal to 4. Similarly, in (d), the vertices in one subtree all have Prim's order less than 5 and the vertices in the other subtree all have Prim's order greater than or equal to 5. This suggests that the vertex sets and edge sets of the two subtrees can simply be determined based on their Prim's orders. Notice that 4 and 5 are the Prim's orders of the longest edge in both cases. This suggests the following proposition.

\begin{proposition}
Let $G$ be a connected edge weighted graph with distinct edge weights. Applying Prim's algorithm to $G$ will result in a unique MST $T$ and a unique Prim's order $P$ (for some arbitrary seed vertex). Let $e_{max}=(v_i,v_j)$ be the longest MST edge. Breaking $e_{max}$ splits $T$ into two subtrees $T_L$ and $T_R$ with vertex sets $V_L,V_R$ and edge sets $E_L,E_R$. Then 
	\[
	V_L = \{v: P(v) < P(e_{max})\} \text{ and } V_R = \{v: P(v) \geq P(e_{max})\}
	\]
	\[
	E_L = \{e: P(e) < P(e_{max})\} \text{ and } E_R = \{e: P(e) > P(e_{max})\}
	\]
	where $P(e_{max})=max(P(v_i),P(v_j))$.
\end{proposition}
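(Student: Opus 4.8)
The plan is to extract from Prim's algorithm the structural fact that the MST edges are indexed by Prim's order, and then to pin down the cut $(V_L,V_R)$ by exploiting that $e_{max}$ is \emph{simultaneously} a ``Prim--minimal'' edge at its iteration and the globally longest tree edge.

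First I would record the following consequence of the algorithm together with the definition $P(e)=\max(P(v_i),P(v_j))$: for each $k\in\{2,\dots,n\}$ there is exactly one MST edge $e_k$ with $P(e_k)=k$, namely the edge inserted at the $k$-th iteration, and it joins the vertex $v_k$ in position $k$ to a vertex in an earlier position. It follows that for every threshold $t$, the set $\{v_1,\dots,v_t\}$ spans a subtree of $T$ whose edge set is exactly $\{e\in E(T):P(e)\le t\}=\{e_2,\dots,e_t\}$; in particular $T$ restricted to $\{v_1,\dots,v_t\}$ is connected. Write $m:=P(e_{max})$. Since the positions are distinct, exactly one endpoint of $e_{max}$ sits in position $m$ --- relabel so that it is $v_j$ --- and then $e_{max}=e_m$, so $v_i$ lies in $A:=\{v_1,\dots,v_{m-1}\}=F_{m-1}$ while $v_j$ lies in $B:=\{v_m,\dots,v_n\}$.

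The crux is the claim that $e_{max}$ is the \emph{only} edge of $T$ with one endpoint in $A$ and one in $B$. At iteration $m$ Prim's rule selected $e_m=e_{max}$ as the minimum-weight edge of $G$ crossing the cut $(F_{m-1},V\setminus F_{m-1})=(A,B)$, so every other edge of $G$ --- hence every other edge of $T$ --- crossing this cut has strictly larger weight than $e_{max}$ (the weights are distinct). But $e_{max}$ is the longest edge of $T$, so no edge of $T$ outweighs it; therefore no other tree edge crosses $(A,B)$. This sandwiching step is the one place where both hypotheses on $e_{max}$ (Prim--minimal at its step, globally maximal in $T$) are used together, and I expect it to be the main point to state cleanly; the rest is bookkeeping.

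Finally I would assemble the pieces. By the claim and the subtree fact above, $T$ is the disjoint union of the subtree $T[A]$ on vertex set $A$ with edge set $\{e:P(e)<m\}$, the single edge $e_{max}$ (the unique tree edge with $P=m$), and the subtree $T[B]$ on vertex set $B$ with edge set $E(T)\setminus\bigl(E(T[A])\cup\{e_{max}\}\bigr)=\{e:P(e)>m\}$. Deleting $e_{max}$ thus leaves exactly the two components $T[A]$ and $T[B]$; since $v_i\in A$ and $v_j\in B$, these are $T_L$ and $T_R$ respectively. Reading off vertex and edge sets gives $V_L=\{v:P(v)<m\}$, $V_R=\{v:P(v)\ge m\}$, $E_L=\{e:P(e)<m\}$ and $E_R=\{e:P(e)>m\}$, which is the assertion.
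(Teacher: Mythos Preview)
Your argument is correct, and it follows a genuinely different route from the paper's. The paper first proves a lemma (Lemma~1) that Prim's order on $G$ coincides with Prim's order on the tree $T$ itself (same seed), thereby reducing the problem to running Prim's algorithm on $T$; once one works inside $T$, the only edge leaving $T_L$ is $e_{max}$ simply because $T$ is a tree, and since every edge of $T_L$ is shorter than $e_{max}$, Prim exhausts $T_L$ before ever selecting $e_{max}$. You instead stay in $G$ throughout and argue from the Prim side: you set $A=F_{m-1}$, $B=V\setminus A$ and use the min--max ``sandwich'' --- $e_{max}$ is the lightest $G$-edge across $(A,B)$ by Prim's choice, yet the heaviest edge of $T$ by hypothesis --- to conclude that no other tree edge can cross $(A,B)$. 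Your approach dispenses with the auxiliary lemma and isolates the one genuinely nontrivial step (the sandwich) very cleanly; the paper's approach trades that for a reduction that makes the tree-side reasoning essentially trivial, and the lemma it uses is reused later for the tied-weights generalisation. Both proofs are short; yours is the more self-contained, while the paper's decomposition pays off when the argument is extended.
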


To prove the proposition, we use the following lemma.

\begin{lemma}
	Let $G$ be a connected edge weighted graph with distinct edge weights. Let $T$ be the minimal spanning tree of $G$. If Prim's algorithm is applied to $G$ and $T$ with the same seed vertex, then the Prim's orders of $G$ and $T$ are the same.
\end{lemma}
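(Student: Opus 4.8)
The plan is to run the two instances of Prim's algorithm in lockstep and show, by induction on the iteration count, that they maintain the same vertex fragment at every stage; equality of the fragments throughout is exactly the assertion that the two Prim's orders coincide. Write $V_k$ (respectively $V_k'$) for the vertex set of the fragment after $k$ iterations of Prim's algorithm applied to $G$ (respectively to $T$), so that $V_1 = V_1' = \{v_{seed}\}$. Since $T$ is a spanning tree, $G$ and $T$ have the same vertex set $V$, so both runs proceed for exactly $n-1$ iterations, and the claim to be proved is simply that $V_k = V_k'$ for every $k$.

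The one external ingredient I would invoke is the \emph{cut property} of minimum spanning trees: for any partition of $V$ into two nonempty parts $S$ and $V\setminus S$, the lightest edge of $G$ with one endpoint in each part belongs to the MST. Because the edge weights are distinct this lightest crossing edge is unique and $T$ is the unique MST, so the property states unambiguously that this edge lies in $E(T)$. (If one prefers not to cite it, it is a two-line exchange argument: were the lightest crossing edge $e$ not in $T$, adjoining $e$ to $T$ would create a cycle containing a second crossing edge $f$ with $W(f) > W(e)$, and replacing $f$ by $e$ would give a spanning tree of strictly smaller total weight, contradicting minimality of $T$.)

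Now the induction. The base case $k=1$ is immediate. For the inductive step, assume $V_k = V_k' =: S$, and note that $S \neq \emptyset$ and $S \neq V$ since both runs are still in progress. Prim's algorithm applied to $G$ selects as its next edge the lightest edge of $G$ crossing the cut $(S, V\setminus S)$; call it $e$. By the cut property, $e \in E(T)$. Prim's algorithm applied to $T$ selects the lightest edge of $T$ crossing the same cut; since $e$ is an edge of $T$ crossing that cut and is lightest among \emph{all} crossing edges of $G$, it is in particular lightest among the crossing edges of $T$, and by distinctness of weights it is the unique such edge. Hence both algorithms choose the same edge $e$, append the same new endpoint, and $V_{k+1} = V_{k+1}'$. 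This closes the induction. The only point needing care is the appeal to the cut property — in particular checking that the cut $(S, V\setminus S)$ is proper at every iteration, which holds because both runs take the same number of steps $n-1$, and that the edge furnished by the cut property is genuinely available to the run on $T$; everything else is bookkeeping.
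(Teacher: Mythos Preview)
Your proof is correct, but it follows a genuinely different route from the paper's. The paper argues by edge deletion: since Prim's algorithm on $G$ outputs $T$, it never selects any edge $e\in E_G\setminus E_T$; hence removing such an $e$ from $G$ leaves both the MST and the sequence of choices made by Prim unchanged, and iterating this deletion until only $T$ remains shows that Prim on $G$ and Prim on $T$ produce the same order. You instead run the two executions in lockstep and appeal to the cut property at each step to show they pick the same crossing edge. Your argument is slightly longer but more explicit---the paper's deletion step (``$P$ will still be the Prim's order of $G-e$'') is exactly the place where the cut property, or the equivalent observation that an edge never chosen can be safely removed, is being used without being named. Conversely, the paper's formulation generalises more readily to the tied-weight setting of Lemma~2, where one wants to add or delete edges one at a time and track \emph{sets} of possible Prim orders rather than a single order.
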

\begin{proof}
Let $E_G$ be the set of edges in $G$ and $E_T$ be the set of edges in $T$. Applying Prim's algorithm to $G$ will result in the MST $T$ and the unique Prim's order $P$. Let $e$ be an edge in $E_G-E_T$. Since $G$ has distinct edge weights, by removing $e$ from $G$, $T$ will still be the MST of $G-e$ and $P$ will still be the Prim's order of $G-e$. Repeatedly remove edges from $G$ until $G=T$. This shows that $G$ and $T$ have the same Prim's order.
\end{proof}

\begin{proof}[Proof of Proposition 1]
As shown in Lemma 1, it is sufficient to prove Proposition 1 for the MST $T$ rather than the original graph $G$. First, apply Prim's algorithm to $T$. Without loss of generality, supppose the seed vertex is in $T_L$. Since all edges in $T_L$ are shorter than $e_{max}$ and the only edge between $T_L$ and $T_R$ is $e_{max}$, this implies that all edges in $T_L$ must be joined to the fragment before $e_{max}$. Therefore, all remaining edges in $T_R$ must be joined after $e_{max}$. This also implies that all vertices in $V_L$ must be connected before any vertices in $V_R$.
\end{proof}

\begin{figure}[!htb]
	\centering
	\includegraphics[width=0.6\linewidth]{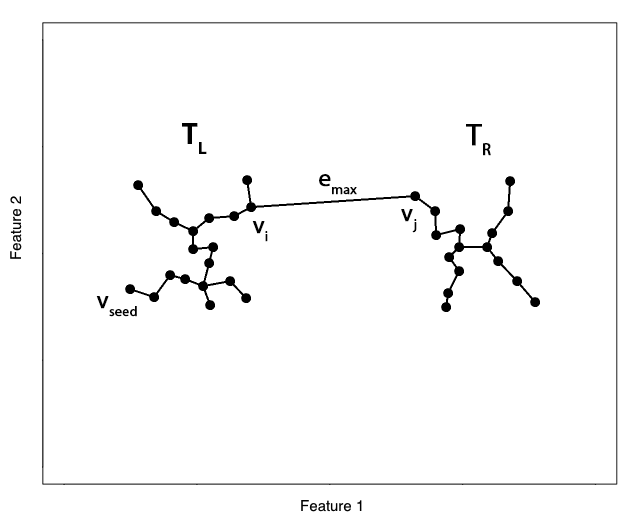}
	\caption{Illustrating graph for Proposition 1}
\end{figure}

While we have proved that Proposition 1 holds for breaking the longest edge once, it remains to be shown that the method for finding $T_L$ and $T_R$ stated in Proposition 1 can be applied recursively to $T_L$ and $T_R$. Note that $T_L$ and $T_R$ are themselves MSTs for their respective vertex sets. Let $P$ denote Prim's order of $T$ with seed vertex $v_{seed}$. Let $P_L$ be Prim's order of $T_L$ with seed vertex $v_{seed}$ and let $P_R$ be Prim's order of $T_R$ with seed vertex $v_j$. Then 
\[
P_L(v)=P(v) \hspace{0.3cm} \forall v\in T_L \hspace{0.2cm} \text{ and } \hspace{0.2cm} P_R(v)=P(v)-P(v_j)+1 \hspace{0.3cm} \forall v\in T_R.
\]
In other words, on their respective subtrees, $P_L$ and $P_R$ are "equivalent" to $P$.
This implies that Proposition 1 can be applied recursively until every vertex is an isolated vertex. Hence, we only need to compute the MST for $G$ and then use its Prim's order to identify the connected components after every split. This allows us to construct the single linkage dendrogram efficiently. Instead of storing all the members of each node, it is sufficient to store the range of their Prim's orders (Figure 4).

\begin{figure}[!htb]
	\centering
	\begin{subfigure}[h]{0.49\linewidth}
		\includegraphics[width=0.9\linewidth]{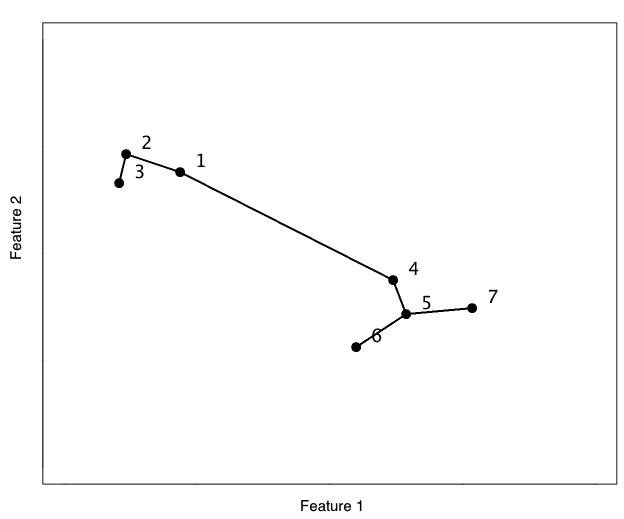}
	\end{subfigure}
	\hfill
	\begin{subfigure}[h]{0.49\linewidth}
		\includegraphics[width=0.9\linewidth]{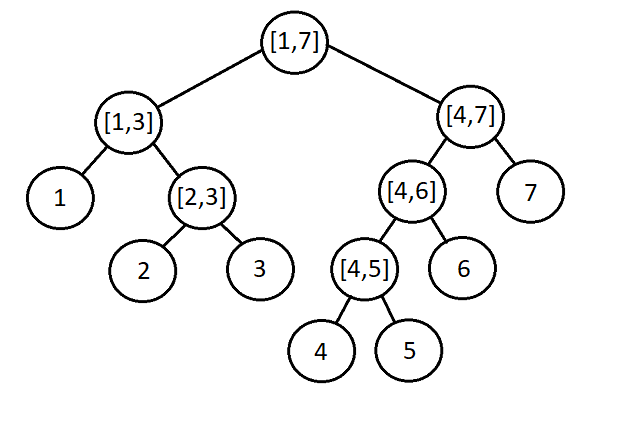}
	\end{subfigure}
	\caption{Single linkage dendrogram with Prim's orders of the node members.}
\end{figure}

\section{The Case of Tied Edge Weights}
Previously we have assumed that there are no tied edge weights in the
graph $G$ and therefore the MST $T$ and Prim’s orders of $G$ (and $T$) are unique. We will now remove this restriction and show that Proposition 1 still holds, with one alteration.

\begin{figure}[!htb]
	\centering
	\includegraphics[width=0.5\linewidth]{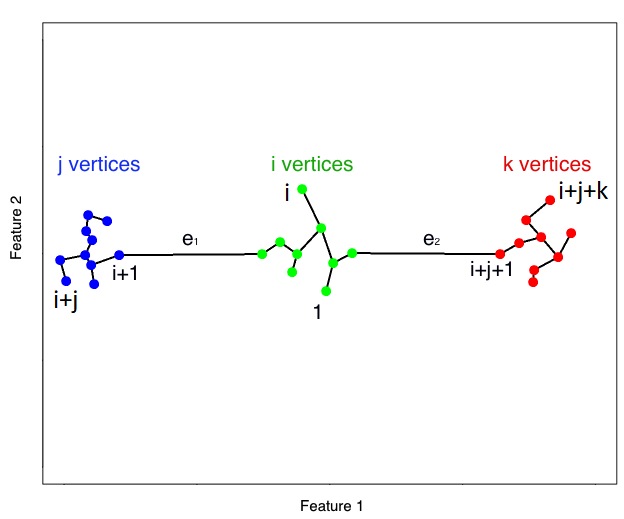}
	\caption{A Prim's Order with Two Longest Tied Edge Weights}
\end{figure}

Figure 5 illustrates the problem. Suppose $T$ has 2 longest edges $e_1$ and $e_2$. Breaking both of them would divide $T$ into three subtrees $T_{green}$, $T_{blue}$, and $T_{red}$. Let $i, j, k$ be the number of vertices in the three subtrees. The fact that there are two longest edges causes ambiguities in tree growing and tree cutting.

Suppose we started Prim's algorithm from a seed vertex in $T_{green}$. Eventually we would have to decide whether to add $e_1$ or $e_2$ next, which would result in different Prim's orders. Let's assume we picked $e_1$ first. Then Prim's order would be as in Figure 5.

Now consider the process of tree cutting. Since edges $e_1$ and $e_2$ have the same length, we need to decide which one to break first. If we broke $e_1$ first, the two connected components would be $T_{blue}$ and $T_{red}+T_{green}$. If Proposition 1 was true, one of the components should have vertices with Prim's order less than $P(e_1) = i+1$, and the other one should have vertices with order greater than or equal to $P(e_1)$. However, this is not the case since the Prim's orders of $T_{blue}$ are $\{(i+1),...,(i+j)\}$ and the Prim's orders of $T_{red} + T_{green}$ are $\{1,...,i\} \cup
\{(i+j+1),...,(i+j+k)\}$. If, on the other hand, we broke $e_2$ first, then the two connected components would be $T_{red}$ and $T_{green} + T_{blue}$, and the corresponding Prim's orders of the vertices would be $\{(i+j+1),...,(i+j+k)\}$ and $\{1,...,(i+j)\}$. Proposition 1 holds in this case. Notice that $P(e_2)>P(e_1)$. This suggests the following proposition:

\begin{proposition}[Generalized version of Proposition 1]
Let $G$ be a connected edge weighted graph. Applying Prim's algorithm to $G$ will result in an MST $T$ and a Prim's order $P$. Let $E_{max}$ be the set of edges with tied longest edge weight. Break the edge $e_{max}=(v_i, v_j) \in E_{max}$ with the largest Prim's order, thereby creating subtrees $T_L$ and $T_R$ with vertex sets $V_L,V_R$ and edge sets $E_L,E_R$. Then 
	\[
	V_L = \{v: P(v) < P(e_{max})\} \text{ and } V_R = \{v: P(v) \geq P(e_{max})\}
	\]
	\[
	E_L = \{e: P(e) < P(e_{max})\} \text{ and } E_R = \{e: P(e) > P(e_{max})\}
	\]
where $P(e_{max})=max(P(v_i),P(v_j))$.
\end{proposition}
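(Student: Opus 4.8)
The plan is to follow the two‑step strategy of the proof of Proposition~1: first reduce from $G$ to its MST $T$ via a tie‑tolerant analogue of Lemma~1, and then track how Prim's algorithm grows across $T$.

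First I would show that the Prim's order $P$ produced by running Prim's algorithm on $G$ --- with whatever tie‑breaking choices were actually made --- is also a valid Prim's order of $T$ with the same seed. This repeats the argument of Lemma~1: if $e\in E_G\setminus E_T$, then $T$ remains a minimum spanning tree of $G-e$, and the same sequence of choices is still a legal execution of Prim's algorithm on $G-e$, since each edge it selects lies in $T$ (so it is still present) and had minimum weight among the crossing edges of $G$, hence still has minimum weight among the smaller set of crossing edges of $G-e$. Deleting the non‑tree edges one at a time leaves $T$ with Prim's order $P$, so it suffices to prove the proposition with $G$ replaced by $T$.

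Next I would record the bookkeeping fact that drives everything. Running Prim on the tree $T$ must output the full edge set $E_T$ (a tree is its own only spanning tree), so each edge $e=(u,u')\in E_T$ with $P(u)<P(u')$ enters the fragment at the step at which $u'$ is added, i.e.\ at step $P(u')=P(e)$, since at the instant $e$ enters exactly one of its endpoints is already present and it must be the lower‑ranked one. Hence the fragment just before step $s$ is exactly $\{v:P(v)<s\}$, and the unique $T$‑edge across the $(T_L,T_R)$ cut, namely $e_{max}$, enters at step $p:=P(e_{max})=\max(P(v_i),P(v_j))$. Without loss of generality let $T_L$ be the subtree containing the seed. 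Because $e_{max}$ is the only $T$‑edge across the cut, the first vertex of $T_R$ to enter the fragment does so through $e_{max}$, and at that moment the $T_L$‑endpoint of $e_{max}$ is already present; writing $v_i,v_j$ for its $T_L$‑ and $T_R$‑endpoints, this gives $P(v_i)<P(v_j)=p$, and no vertex of $T_R$ enters before step $p$, so the fragment $F$ just before step $p$ satisfies $F\subseteq V_L$.

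The heart of the argument --- and the step I expect to be the main obstacle --- is upgrading ``$F\subseteq V_L$'' to ``$F=V_L$'', and this is exactly where the hypothesis that $e_{max}$ has the \emph{largest} Prim's order among the tied longest MST edges is used (Figure~5 shows the conclusion can fail for the other ones). Suppose $F\subsetneq V_L$. Since $T_L$ is connected, some $T$‑edge $e'=(u,u')$ has $u\in F$ and $u'\in V_L\setminus F$; then $e'$ and $e_{max}$ are both crossing edges available at step $p$, and Prim picked $e_{max}$, so $w(e')\ge w(e_{max})$, which (since $e_{max}$ is a longest edge of $T$) forces $w(e')=w(e_{max})$ and hence $e'\in E_{max}$. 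But $u'\notin F=\{v:P(v)<p\}$ and $u'$ is not the vertex added at step $p$ (that was $v_j\in V_R$), so $P(u')>p$ and therefore $P(e')=P(u')>p=P(e_{max})$, contradicting the choice of $e_{max}$. Hence $V_L=F=\{v:P(v)<p\}$ and $V_R=\{v:P(v)\ge p\}$. The edge formulas then follow from the bookkeeping fact: an MST edge added before step $p$ has both endpoints in $V_L$ (the fragment never escapes $V_L$ before step $p$), while one added after step $p$ must have both endpoints in $V_R$, as otherwise it would be a second $T$‑edge across the cut; so $E_L=\{e\in E_T:P(e)<P(e_{max})\}$ and $E_R=\{e\in E_T:P(e)>P(e_{max})\}$.
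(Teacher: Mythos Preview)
Your proposal is correct and follows essentially the same approach as the paper: reduce from $G$ to $T$ via the tie-tolerant Lemma, take the seed in $T_L$, and use that $e_{max}$ is the unique bridge together with the maximality of $P(e_{max})$ in $E_{max}$ to conclude that all of $T_L$ is absorbed before $e_{max}$ is added. The only organizational difference is that the paper first argues that every $e\in E_{max}\setminus\{e_{max}\}$ lies in $T_L$ and then invokes the Proposition~1 argument, whereas you go straight to the contradiction ``$F\subsetneq V_L$ forces some $e'\in E_{max}$ with $P(e')>P(e_{max})$''; your version makes the role of the largest-Prim's-order hypothesis more explicit at the crucial step.
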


We first prove a generalized version of Lemma 1.

\begin{lemma}[Generalized version of Lemma 1]
Let $G$ be a connected edge weighted graph and $T$ be a minimal spanning tree of $G$. If Prim's algorithm is applied to $G$ and $T$ with the same seed vertex, then every Prim's order of $G$ is a Prim's order of $T$ and every Prim's order of $T$ is a Prim's order of $G$.
\end{lemma}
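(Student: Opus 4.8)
The statement is a two‑way correspondence between executions of Prim's algorithm on $G$ and on $T$, and the plan is to prove each direction by \emph{simulating} one execution with an execution of the other type, so that the two add vertices in the same sequence and therefore pass through the same nested family of fragments $\{v_{seed}\} = S_0 \subset S_1 \subset \cdots \subset S_n = V$. Before that I would pin down what the claim should say when there are ties: different executions of Prim on $G$ can output different minimal spanning trees, so I read ``a Prim's order of $G$'' as one produced by an execution of Prim on $G$ whose output spanning tree is the fixed MST $T$ (an execution that ever selects an edge outside $T$ produces a different MST, and there is no reason its order should be a Prim's order of $T$). Under this reading the claim becomes: for a fixed MST $T$ of $G$ and a fixed seed, the orders produced by executions of Prim on $G$ that output $T$ coincide with the orders produced by executions of Prim on $T$.

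For the inclusion ``order on $G$ $\Rightarrow$ order on $T$'': fix an execution of Prim on $G$ from $v_{seed}$ that outputs $T$, say it adds the edge $e_k \in E_T$ at step $k$, and run Prim on $T$ from $v_{seed}$ making the same choice at every step. This is legal: inductively the fragment before step $k$ is $S_{k-1}$; the edge $e_k$ crosses the cut $(S_{k-1}, V \setminus S_{k-1})$, lies in $E_T$, and $w(e_k)$ equals the minimum weight of a $G$-edge across that cut. Since $E_T \subseteq E_G$, that $G$-minimum is at most the $T$-minimum across the cut, which in turn is at most $w(e_k)$; so all three coincide and $e_k$ is a legal Prim move on $T$. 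The fragments stay equal, hence the orders agree. This direction uses nothing beyond $E_T \subseteq E_G$.

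For ``order on $T$ $\Rightarrow$ order on $G$'': fix an execution of Prim on $T$ from $v_{seed}$, adding $e_k$ at step $k$, and imitate it on $G$. The same bookkeeping reduces everything to one fact, the cut property of minimal spanning trees: for every nonempty proper $S \subsetneq V$, the minimum weight of a $T$-edge across $(S, V \setminus S)$ equals the minimum weight of a $G$-edge across it; equivalently, $T$ contains a minimum-weight $G$-edge across every cut. Granting this, $w(e_k)$ --- which is the $T$-minimum across $(S_{k-1}, V \setminus S_{k-1})$ --- is also the $G$-minimum there, so $e_k$ is a legal Prim move on $G$; the fragments stay equal, the orders agree, and since the moves are exactly the edges of $T$ this $G$-execution outputs $T$, consistent with the reading above.

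The cut property is the one genuinely non-trivial ingredient and the step I expect to be the main obstacle; it is the tied-weight analogue of the edge-removal step in the proof of Lemma 1, which looked transparent only because distinct weights forced uniqueness there. I would prove it by the usual exchange argument: if a $G$-edge $f$ across a cut were strictly lighter than every $T$-edge across that cut, the unique cycle of $T + f$ would cross the cut an even number of times, hence contain a $T$-edge $g \neq f$ across it with $w(g) > w(f)$, and then $T - g + f$ would be a spanning tree of smaller weight, contradicting minimality of $T$. A more computational alternative, closer in spirit to the paper's proof of Lemma 1, is to delete the non-tree edges of $G$ one at a time --- $T$ stays an MST throughout, an execution that outputs $T$ never used a deleted edge, and the minimum crossing weights never change --- reducing $G$ to $T$, and then reverse the deletions for the other inclusion. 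Either way the crux is the invariance of the minimum crossing weight.
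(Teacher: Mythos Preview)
Your proof is correct, and your opening clarification is well taken: as stated, the lemma is false unless ``a Prim's order of $G$'' is read as one coming from an execution whose output tree is the given $T$ (your triangle-with-all-ties example would witness this). The paper tacitly adopts the same reading, since its forward direction begins ``Applying Prim's algorithm to $G$ will result in an MST $T$\ldots'', identifying $T$ with the output of the run under consideration.

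Your main argument differs from the paper's. The paper argues by incremental reduction: for $(\Rightarrow)$ it deletes the non-tree edges of $G$ one at a time, asserting at each step that $T$ remains an MST and the chosen order remains a valid Prim's order; for $(\Leftarrow)$ it reverses the process, adding the non-tree edges back to $T$. You instead run the two algorithms in lockstep and use the cut property of MSTs directly to certify that each move is legal on the other graph. You do mention the deletion/addition alternative at the end, so you have in fact anticipated the paper's approach.

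The trade-off: the paper's argument is shorter on the page, but its key assertion---that adding a non-tree edge $e$ to $T$ leaves $P_T$ a valid Prim's order of $T+e$---is exactly the statement that $e$ is never strictly lighter than the minimum $T$-edge across the current cut, which is the cut (equivalently cycle) property you isolate and prove. So your version surfaces and justifies the one substantive step that the paper leaves implicit, at the cost of a little more writing.
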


\begin{proof}
Let $E_G$ be the set of edges in $G$ and $E_T$ be the set of edges in $T$. The proof has two directions. \\
$(\Rightarrow)$ Applying Prim's algorithm to $G$ will result in an MST $T$ and a Prim's order $P$ of $G$. Let $e$ be an edge in $E_G-E_T$. Removing $e$ from $G$, $T$ will still be a MST of $G-e$ and $P$ will still be a Prim's order of $G-e$. Repeatedly remove edges from $G$ until $G=T$. Then $P$ is also a Prim's order of $T$. \\
$(\Leftarrow)$ Applying Prim's algorithm to $T$ will define a Prim's order $P_T$ of $T$. Let $e$ be an edge in $E_G-E_T$. Adding $e$ to $T$, $T$ will still be a MST of $T+e$ and $P_T$ will still be a Prim's order of $T+e$. Repeatedly add edges to $T$ until $T=G$. Then $P_T$ is also a Prim's order of $G$.
\end{proof}

\begin{proof}[Proof of Proposition 2]
Based on Lemma 2, we know any Prim's order of $T$ is a Prim's order of $G$ and any Prim's order of $G$ is a Prim's order of $T$. Therefore, it suffices to prove Proposition 2 for $T$ rather than the original graph $G$. Without loss of generality, the seed vertex is in $T_L$. We claim that for any $e\in E_{max}$ such that $e\neq e_{max}$, $e$ must be in $T_L$. Suppose $e\in T_R$, then $e$ must have a larger Prim's order than $e_{max}$ since we must come through $e_{max}$ before joining any edges in $T_R$. However, by hypothesis $P(e)<P(e_{max})$. So this is a contradiction and $e$ must be in $T_L$. For other edges in $T_L$, since $e_{max}$ is a longest edge, this implies that $e_{max}$ won't be chosen until all the edges in $T_L$ have been chosen. Therefore, the same conclusion is drawn as in Proposition 1. 
\end{proof}

\section{Summary}

We address the problem of computing single linkage dendrogram. A possible approach is to: 

\begin{itemize}
	\item Form an edge weighted graph $G$ over the data, with edge weights reflecting dissimilarities.
	\item Calculate the MST $T$ of $G$.
	\item Break the longest edge of $T$ thereby splitting it into subtrees $T_L$, $T_R$.
	\item Apply the splitting process recursively to the subtrees.
\end{itemize}

This approach has the attractive feature that Prim's algorithm for MST construction calculates distances as needed, and hence there is no need to ever store the inter-point distance matrix.

The recursive partitioning algorithm allows us to determine the vertices (and edges) of $T_L$ and $T_R$. We have shown how this can be done easily and efficiently using Prim's order generated by Prim's algorithm without any additional computational cost.

\newpage

\bibliographystyle{apacite}
\bibliography{reference}

\begin{thebibliography}{}

\bibitem [\protect \citeauthoryear {%
Gower%
\ \BBA {} Ross%
}{%
Gower%
\ \BBA {} Ross%
}{%
{\protect \APACyear {1969}}%
}]{%
mst}
\APACinsertmetastar {%
mst}%
\begin{APACrefauthors}%
Gower, J.%
\BCBT {}\ \BBA {} Ross, G.%
\end{APACrefauthors}%
\unskip\
\newblock
\APACrefYearMonthDay{1969}{}{}.
\newblock
{\BBOQ}\APACrefatitle {Minimum Spanning Trees and Single Linkage Cluster
  Analysis} {Minimum spanning trees and single linkage cluster
  analysis}.{\BBCQ}
\newblock
\APACjournalVolNumPages{Applied Statistics}{18}{}{54-64}.
\PrintBackRefs{\CurrentBib}

\bibitem [\protect \citeauthoryear {%
Prim%
}{%
Prim%
}{%
{\protect \APACyear {1957}}%
}]{%
prim}
\APACinsertmetastar {%
prim}%
\begin{APACrefauthors}%
Prim, R.%
\end{APACrefauthors}%
\unskip\
\newblock
\APACrefYearMonthDay{1957}{}{}.
\newblock
{\BBOQ}\APACrefatitle {Shortest Connection Networks and Some Generalizations}
  {Shortest connection networks and some generalizations}.{\BBCQ}
\newblock
\APACjournalVolNumPages{Bell System Technical Journal}{36}{}{1389-1401}.
\PrintBackRefs{\CurrentBib}

\bibitem [\protect \citeauthoryear {%
Stuetzle%
}{%
Stuetzle%
}{%
{\protect \APACyear {2003}}%
}]{%
runtpruning}
\APACinsertmetastar {%
runtpruning}%
\begin{APACrefauthors}%
Stuetzle, W.%
\end{APACrefauthors}%
\unskip\
\newblock
\APACrefYearMonthDay{2003}{}{}.
\newblock
{\BBOQ}\APACrefatitle {Estimating the Cluster Tree of a Density by Analyzing
  the Minimal Spanning Tree of a Sample} {Estimating the cluster tree of a
  density by analyzing the minimal spanning tree of a sample}.{\BBCQ}
\newblock
\APACjournalVolNumPages{Journal of Classification}{20}{5}{25-47}.
\PrintBackRefs{\CurrentBib}

\end{thebibliography}

\end{document}